\newtheorem{thm}{Theorem}
\newtheorem{defn}{Definition}
\newtheorem{lem}{Lemma}
\newtheorem{prop}{Proposition}
\providecommand{\1}{\mathbf{1}}
\providecommand{\quotes}[1]{``#1"}
\begin{document}
%
\title{Finite-time consensus using stochastic matrices with positive diagonals }

\author{Julien M. Hendrickx,  Guodong Shi  and  Karl H. Johansson
\thanks{
       J. M. Hendrickx is with the ICTEAM institute,  Universit\'{e} catholique de Louvain, Belgium. His work is supported by the Belgian Network DYSCO (Dynamical Systems, Control, and Optimization), funded by the Interuniversity Attraction Poles Program, initiated by the Belgian Science Policy Office. The research is also supported by the Concerted Research Action (ARC) Large Graphs and Networks of the French Community of Belgium.
 G. Shi is with College of Engineering and Computer Science, The Australian National University, ACT 0200 Canberra, Australia. K. H. Johansson is with ACCESS Linnaeus Centre, School of Electrical Engineering,
Royal Institute of Technology, Stockholm 10044, Sweden. G. Shi and K. H. Johansson are supported by the Knut
    and Alice Wallenberg Foundation, the Swedish Research Council. Email: {\tt\small  julien.hendrickx@uclouvain.be, guodong.shi@anu.edu.au, kallej@kth.se}. }}
\date{}

\maketitle

\begin{abstract}
We discuss the possibility of reaching  consensus in finite time using only linear
iterations, with the additional restrictions that the update matrices must be stochastic with positive diagonals and consistent with a given graph structure.
We show that finite-time average consensus can always be achieved for connected undirected graphs.
For directed graphs, we show some necessary conditions for finite-time consensus, including strong connectivity and the presence of a simple cycle of even length.
\end{abstract}

\begin{IEEEkeywords}
Agents and autonomous systems; Sensor Networks; Finite-time consensus
\end{IEEEkeywords}


\section{Introduction}

\IEEEPARstart{T}{he}  problem of how a set of autonomous agents  can reach a common state via only local information exchange is widely studied. The problem becomes the  average consensus problem when the  limit is restricted to the average value of the initial states.  A standard solution is given by the  consensus algorithm \cite{JNTthesis, JNT86, jad03}, where each node iteratively updates its value as  a convex combination of the values of its neighbors.
This corresponds to a
linear dynamical system whose state-transition matrices are stochastic matrices. The asymptotic convergence of consensus algorithms has been extensively studied under various graph conditions \cite{JNTthesis, JNT86, jad03, xiao, caoming1,ren05,VBcdc,nedic09,olshevsky09,aysal09}, including some work on the optimization of the convergence rate, e.g., \cite{xiao}.
This convergence rate affects indeed the performance of several more complex algorithms using (part of) the consensus algorithms as subroutine.

Pushing this optimization to its limit leads to consensus algorithms converging in finite time. It has been shown in the literature that finite-time consensus can be reached via  continuous-time protocols \cite{cortes06,hui08,wang2010}. Quantized consensus algorithms as well converge in finite time  \cite{kashyap07, aysal08}. Discrete-time consensus algorithms converging in finite time have also been recently  discussed in  \cite{LG,kibangou,HJOG2012,guodong1a,guodong1b,ko2009matrix}, and the possibility of reaching consensus in a finite number of steps via gossiping was studied in \cite{guodong2,ko2009scheduling}.  These algorithms share several of the advantages of the centralized algorithms: They have a finite computational cost, and they guarantee that there exists a time at which all agents have exactly the same value, as opposed to approximately the same value. Actually, it has  been shown that distributed algorithms converging in finite time in some settings are faster than any possible centralized algorithm \cite{LG,kibangou,HJOG2012}.

In this paper, we investigate  finite-time convergence for consensus algorithms defined by a product of stochastic matrices with positive diagonal entries. The positivity condition means that agents always give positive weights to their own states when computing their new states. This natural condition is widely imposed in the existing literature on consensus algorithms, e.g.,   \cite{JNT86,jad03,ren05,caoming1,VBcdc,nedic09,olshevsky09,guodong1a,guodong1b}, and is for example automatically satisfied by any algorithm representing the sampling of a continuous-time process. In the absence of  positivity condition, deciding whether consensus is reached  becomes a fundamentally hard problem \cite{haj, VBAO}. The restriction to stochastic matrices with positive diagonal entries is one of the main differences between our work and the results in \cite{LG,kibangou,HJOG2012}, as they allow general real matrices, as long as they are consistent with the graph under consideration.
Similarly, the authors of \cite{ko2009matrix} only require their matrices to be column-stochastic (i.e. nonnegative and having each column summing to 1) in order to preserve the average value of $x$, but not necessarily stochastic. Some of the algorithms that they obtain do however also satisfy our requirements, as will be explained in Section \ref{sec:undirected_graphs}.

The problem we consider is also related to the finite-time consensus computation problems \cite{chris,yuanye}, where computing the consensus limit in finite steps from a given asymptotically convergent algorithm  was considered.  Compared to the problem considered in this paper, those methods require more memory and node computations.

We now introduce  the  problem under consideration. A matrix $A\in \Re^{n\times n}$ is \emph{stochastic} if it is nonnegative and  $A\1 = \1$, i.e., the elements of any of its row sum to one. We say that a stochastic matrix is \emph{consistent with a graph $\mathcal{G}(V,E)$} with ${V}=\{1,\dots,n\}$ if $A_{ij}>0$ for $i\neq j$ only if $(j,i)\in E$.
We insist on the fact that the presence of the edge $(j,i)$ does not require $A_{ij}$ to be positive, but only allows it. We say that $A$ has a \emph{positive diagonal} if $A_{ii}>0$ for every $i$. Finally, we use $v'$ to denote the transpose of a vector $v$ in order to avoid ambiguities with the finite time $T$.
The first problem that we consider is finite-time consensus.

\begin{defn}\label{def:finite_time_consensus}
The sequence of stochastic matrices $(A_1,A_2,\dots,A_T)$ with positive diagonal achieves \emph{finite-time consensus} on a graph $\mathcal{G}$, if $A_t$  is consistent with $\mathcal{G}$ for $t=1,\dots,T$ and
$A_TA_{T-1}\dots A_2A_1 \in \mathbf {S}$, where $\mathbf{S}$ denotes the set of rank-one stochastic  matrices in $\Re^{n\times n}$,  i.e., matrices of the form $\1v'$, for some nonnegative vector $v$ whose entries sum to one. \end{defn}
So, if a sequence of stochastic matrices $(A_1,A_2,\dots,A_T)$ with positive diagonal achieves finite-time consensus, the iteration $x(t) = A_t x(t-1)$ reaches $x(T)= x^*\1$ for every $x(0)=(x_1(0),\dots,x_n(0))'\in \Re^n$, for some $x^*\in \Re$ that depends on $x(0)$. If  $x^*$ is always the average value of $x_1(0),\dots,x_n(0)$, i.e., equal to $\frac{1}{n}\1'x(0)$, then we say that the matrix sequence achieves finite-time average consensus.

\begin{defn}
The sequence of matrices $(A_1,A_2,\dots,A_T)$ with positive diagonal achieves \emph{finite-time average consensus} on a graph $\mathcal{G}$,  if $A_t$  is consistent with $\mathcal{G}$ for $t=1,\dots,T$ and
$A_TA_{T-1}\dots A_2A_1 = \frac{1}{n} \1\1'$.
\end{defn}

The outline for the rest of the paper is as follows. In Section 2 we  show that finite-time average consensus can always be achieved on connected bidirectional graphs. Then  Section 3  discusses  directed graphs, for which finite-time consensus  is far more challenging. We present three necessary conditions for finite-time consensus on directed graphs, and an example of a directed graph for which finite-time consensus can be achieved. Finally some concluding remarks are given in Section 4.

\section{Undirected graphs}\label{sec:undirected_graphs}

We show that finite-time average consensus can always be achieved on undirected graphs.
This result could actually also be obtained by an application of an algorithm of Ko and Gao \cite{ko2009matrix}, developped independently of this work and with a different approach. In \cite{ko2009matrix}, average consensus in finite time is reached by having first one node obtaining the average of all nodes' values, while preserving the global average constant. Then, this node is excluded from further interactions, and the procedure is successively repeated on all other nodes in an appropriate order. Our proof, on the other hand, relies on recursively building a set of agents at (average) consensus, and growing this \quotes{island of consensus} by successively adding all the nodes.

\begin{thm}\label{thm:undirected_trees}
If $\mathcal{G}$ contains a bidirectional spanning tree, then there exists a sequence of at most $n(n-1)/2$ stochastic matrices with positive diagonal that achieves average consensus on $\mathcal{G}$. In particular, finite-time average consensus can be achieved on every undirected graph.
\end{thm}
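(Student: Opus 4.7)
The plan is to reduce to the case where $\mathcal{G}$ is itself a tree---a stochastic matrix consistent with a spanning tree of $\mathcal{G}$ is \emph{a fortiori} consistent with $\mathcal{G}$---and then induct on $n$. The base cases $n=1$ (empty product) and $n=2$ (the single doubly stochastic matrix $\frac{1}{2}\1\1^T$ on the unique edge) are immediate.

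For the inductive step on a tree with $n$ nodes, I would pick a leaf $\ell$ with unique neighbor $p$ and let $\mathcal{G}'$ denote the subtree on the remaining $n-1$ vertices. The inductive hypothesis provides a sequence of $(n-1)(n-2)/2$ stochastic matrices with positive diagonals, consistent with $\mathcal{G}'$, whose product equals $\frac{1}{n-1}\1\1^T$; padded by an identity row and column at the $\ell$-coordinate, these remain stochastic, consistent with $\mathcal{G}$, and positive on the diagonal. Since $(n-1)(n-2)/2 + (n-1) = n(n-1)/2$, the goal is to produce $n-1$ additional matrices so that the complete product becomes $\frac{1}{n}\1\1^T$.

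The natural template for these extra matrices is a \emph{gather-then-broadcast} scheme: a matrix or two before the inductive block to inject $x_\ell$'s contribution into $\mathcal{G}'$ with the correct weight, and the remaining matrices after, to pull the global average back onto $\ell$ and to reconcile the nodes of $\mathcal{G}'$ whose values have drifted during the injection. The main obstacle is the broadcast phase: the strict positive-diagonal constraint forbids overwriting a node's value with that of a neighbor in a single step, because any update of the form $x_i \leftarrow \alpha x_i + (1-\alpha)x_j$ with $\alpha > 0$ retains residual contamination from the old $x_i$. A pure gather-then-broadcast therefore cannot close the argument in finite time; the two phases must be intertwined, or the inductive block itself must be run with a non-uniform target to pre-compensate for the leaf.

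To make this intertwining systematic, I would strengthen the inductive hypothesis to the following: on every tree of $n$ vertices and for every positive probability vector $w$, a sequence of $n(n-1)/2$ stochastic matrices with positive diagonals, consistent with the tree, realizes $\1 w^T$. Under this strengthening, the inductive step becomes the joint selection of (i) an opening matrix on $\{\ell, p\}$ and (ii) a target vector $w'$ for the $\mathcal{G}'$-subproblem such that, after applying the inductive sequence, every vertex of $\mathcal{G}'$ holds the global average $\bar x$, while the additional weighted averages along $\{\ell, p\}$ lift $x_\ell$ to $\bar x$ as well. Verifying that this strengthened statement is closed under the inductive step---in particular that every individual matrix can be kept stochastic with strictly positive diagonal, and that the degrees of freedom at each scale suffice to cancel the bias introduced by the previous one---is the main technical hurdle I would confront to finish the proof.
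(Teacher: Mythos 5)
Your outer structure matches the paper's: reduce to a spanning tree, induct by peeling off a leaf, and account for $n-1$ (in the paper, at most ${\rm diam}$) extra matrices per level, giving the quadratic bound. But the proof is not complete, and the completion you sketch would not work. You correctly identify the obstacle (positive diagonals forbid overwriting, so a node always retains a strictly positive coefficient on its previous value), yet your final plan --- an opening matrix on $\{\ell,p\}$, the inductive block on $\mathcal{G}'$ re-targeted to some positive weight vector $w'$, then closing averages along $\{\ell,p\}$ to lift $x_\ell$ --- runs into exactly that obstacle. During the inductive block $\ell$ is frozen, so entering the closing phase its value is a fixed functional of $(x_\ell(0),x_p(0))$ only; every closing step keeps a strictly positive coefficient on that functional, so $x_\ell$ can never identically equal the global average $w^Tx(0)$, which depends on all coordinates. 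Pre-compensation via the choice of $w'$ cannot help either, since no convex combination of $x_\ell(0)$ and $x_p(0)$ equals the global average for all initial conditions. So the "main technical hurdle" you defer is not a routine verification; as formulated, the step fails.

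The idea you are missing is how the paper circumvents the overwriting problem. After the inductive block the state lies in a two-dimensional space: all nodes of $\mathcal{G}'$ share one value and the leaf $v_0$ holds another. Hence the extra matrices need only fix $\1$ (automatic for stochastic matrices) and annihilate one specific vector, namely the discrepancy vector with value $1$ at $v_0$ and $-1/n$ elsewhere; linearity then gives average consensus for every initial condition. For that single vector, exact zeroing in finitely many steps is compatible with positive diagonals because of an intermediate-value argument: the paper layers the tree by distance from $v_0$ and, in a wave involving the whole tree (not just the edge $\{\ell,p\}$), drives each layer's nodes to values ($1/2^k$ or $0$) that lie strictly between the node's own current value and that of an adjacent layer's node, so each update is a genuine convex combination with positive self-weight. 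This also establishes, with a small modification, the strengthened statement you wanted (any positive weighted average in the same number of steps), but via the wave mechanism rather than via closing averages restricted to the leaf edge.
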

\begin{proof}
We show by recurrence that finite-time average consensus can be reached on a bidirectional tree $\mathcal{G}_T$ in $n(n-1)/2$ steps, which will complete  the proof since every edge that does not belong to the  bidirectional spanning tree of $\mathcal{G}$ can just be ignored.

The result trivially holds if the tree $\mathcal{G}_T$ contains only one node. Let us suppose now that it contains $n+1 \geq 2$ nodes, and select a leaf node (i.e., a node with degree one) which we call $v_0$. By our recurrence assumption, average consensus can be reached for nodes $\mathcal{V}\setminus\{v_0\}$ in $T \leq n(n-1)/2$ steps since the graph obtained by removing node $v_0$ from $\mathcal{G}_T$  is  a connected tree of size $n$. Let us suppose that suitable matrices have been chosen  so that for every $i\in \mathcal{V}\setminus\{v_0\}$ there holds $x_i(T) = \bar x_{V\setminus\{v_0\}}(0)=\frac{1}{n}\sum_{j\in V\setminus\{v_0\}} x_j(0)$.

We assume now that $x_{v_0}(T) = x_{v_0}(0) = 1$ and  $x_{V\setminus\{v_0\}}(0) = -\frac{1}{n}$, so that their average is $1 + n\times\frac{-1}{n}=0$. We are going to find a sequence of $n$ (or less) stochastic matrices with positive diagonal consistent with $\mathcal{G}_T$ that drive all states to zero in finite time.

We denote by ${\rm diam}(\mathcal{G}_T)$ the diameter of $\mathcal{G}_T$, i.e., the largest distance between any two nodes of $\mathcal{G}_T$. In particular, every node is at a distance at most ${\rm diam}(\mathcal{G}_T)$ from $v_0$. For $k = 0,1,\dots, {\rm diam}(\mathcal{G}_T)-1$, let $V_k$ be the set of nodes at distance $k$ from $v_0$ on the tree $\mathcal{G}_T$ that are not leaves, i.e., $V_0= \{v_0\}$, $V_1$ is the set non-leaf neighbors of $v_0$, $V_2$ the set of non-leaf neighbors of nodes in $V_1$ that do not belong to $V_1$ nor $V_0$, etc. Let  $L_k$ be the set of leaves at distance $k$ from $v_0$ for $k=1,\dots, {\rm diam}(\mathcal{G}_T)$. See Figure \ref{fig:ex_tree} for an illustration.

\begin{figure}
\centering
\includegraphics[scale= .6]{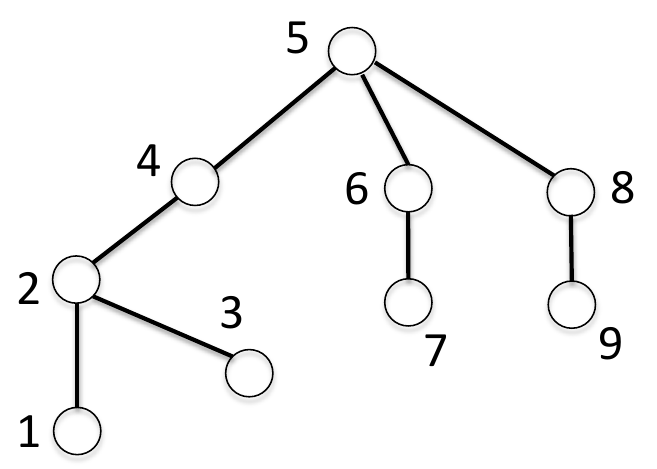}
\caption{Illustration for the proof of Theorem  \ref{thm:undirected_trees}. Suppose $v_0=1$ is selected. Then for the tree given in the figure we have $V_1=\{2\}$, $V_2=\{4\}$, $V_3=\{5\}$, $V_4=\{6,8\}$, $L_1=\{\emptyset\}$,  $L_2=\{3\}$,  $L_3=L_4=\{\emptyset\}$, and $L_5=\{7,9\}$. Clearly every node in $V_k$ and $L_k$, if any, is connected to only one node of $V_{k-1}$.  }\label{fig:ex_tree}
\end{figure}

Observe that for every $k>0$, every node in $V_k$ and $L_k$, if non-empty,  is connected to exactly one node of $V_{k-1}$, as follows from the following argument: The existence of at least one neighbor in $V_{k-1}$ follows from the definition of distance. On the other hand, no node of $V_k$ or $L_k$ can be connected to two nodes of $V_{k-1}$, as that would form a cycle in $\mathcal{G}_T$, which is impossible since  $\mathcal{G}_T$ is a tree. Nodes in $V_k$ are also connected to at least one node in $V_{k+1}$ or $L_{k+1}$. Indeed, they are by definition not leaves, so they must be connected to at least one other node than the one in $V_{k-1}$, but they cannot be connected to any other node in $V_k$ or $V_{k-1}$, for that would create a cycle. Besides, connections between nodes whose distance to $v_0$ differ by more than one are by definition impossible. Finally, for every $k$ all nodes in $V_k$ and $L_k$ share a common value at time $T$ ($1$ for $V_0$ and $-1/n$ for the others).

We now show that the following evolution of  $x_i(t), t\geq T$ can be achieved by multiplying $x$ by ${\rm diam}(\mathcal{G}_T)$ stochastic matrices with positive diagonal consistent with $\mathcal{G}_T$:
\begin{itemize}
\item If $i\in V_k$, then (i) $x_i(t) = -1/n$ for $T \leq t < T +k$; (ii) $x_i(T+k)  =1 /2^{k}$; (iii); $x_i(t) = 0$ for $t>T+k$.
\item If $i$ is a leaf in $L_k$, then (i) $x_i(t) = -1/n$ for $T \leq t < T +k$; (ii) $x_i(t) = 0$ for $t\geq T+k$.
\end{itemize}
We show this by recurrence on $k$. For $k=0$, the situation corresponds to that of our initial recurrence assumption. Suppose now that it holds for $k-1$ and let us consider step $k$ ($k>0$). Only nodes in $L_k$, $V_k$ and $V_{k-1}$ change their values, so other nodes need not be considered.

Nodes in $L_k$ and $V_k$ have value $-1/n$ at time $T+k-1$. As argued above, each of them is connected to a node in $V_{k-1}$, who has a value $1/2^{k-1}$ at time $T+k-1$ by the recurrence hypothesis. Therefore, the new value zero of the nodes in $L_k$ at time $T+k$
lies strictly between their former value $-1/n$ and the former value $1/2^{k-1}$ of their neighbors in $V_{k-1}$. The value is thus equal to a weighted average of these two values with positive coefficients. The same argument applies also for the nodes in  $V_k$. In other words, the desired $x(T+k)$ can be reached by multiplying $x(T+k-1)$ with a stochastic matrix consistent with $\mathcal{G}_T$ with positive diagonal.

Consider now the nodes of $V_{k-1}$. Their value at $T+k-1$ is $1/2^{k-1}$, and their new value at $t+k$ is zero. As argued above, each of these nodes has at least one neighbor in $V_k$ or $L_k$, whose value at time $T+k-1$ is thus $-1/n$. The new value zero at time $T+k$ of nodes in $V_{k-1}$ lies thus strictly between their former value $1/2^{k-1}$ and that of the neighbors in $V_k$ or $L_k$. It can therefore be reached by multiplication by a stochastic matrix consistent with $\mathcal{G}_T$ with positive diagonals, which completes the proof of the recurrence.

We have thus shown the existence of  $A_{T+1}, A_{T+2},\dots,A_{T+{\rm diam}(\mathcal{G}_T)}$ with positive diagonal and consistent with $\mathcal{G}_T$ such that $A_{T+{\rm diam}(\mathcal{G}_T)}\cdots A_{T+2}A_{T+1} x(T) = 0$ if $x_{v_0}(T) = 1$ and $x_i(T) = -1/n$ for every $i\in  V\setminus\{v_0\}$. Using linearity and the fact that $A\1=\1$ for stochastic
matrices,   it follows that under the recurrence assumption $x_i(T) = \bar x_{V\setminus\{v_0\}}(0)=\frac{1}{n}\sum_{j\in V\setminus\{v_0\}} x_j(0)$ for all $i\in V\setminus\{v_0\}$,  that $A_{T+{\rm diam}(\mathcal{G}_T)}\dots A_{T+2}A_{T+1} x(T)  = \bar x_{V}(0)\1$. Average consensus is thus achieved on $\mathcal{G}_T$ in $T + {\rm diam}(\mathcal{G}_T)$ steps. Using the recurrence
assumption $T\leq n(n-1)/2$ and the bound ${\rm diam}(\mathcal{G}_T)\leq n$ for a graph of $n+1$ nodes, it follows that average consensus is achieved in at most $\frac{1}{2}n(n-1)+ n = \frac{1}{2}n(n+1)$ steps on any tree of $n+1$ nodes, which completes our proof.
\end{proof}
By a small modification of the proof, one can actually show that it is possible to reach any weighted average of the initial conditions with positive weights in the same number of steps.

\section{Directed graphs}\label{sec:directed_graphs}

Theorem \ref{thm:undirected_trees} shows that finite-time (average) consensus can always be achieved on a directed graph if it contains a bidirectional spanning tree. We will now see that the situation is much more complex when the graph is \quotes{essentially} directed and does not contain a bidirectional spanning tree. We begin by providing certain necessary conditions.

As is well known in the literature \cite{xiao,ren05}, the existence of a  directed spanning tree for a directed graph $\mathcal{G}$  is a necessary and sufficient condition for finding an asymptotically convergent  consensus algorithm on  $\mathcal{G}$. In our next result, we show that that strong connectivity is necessary  for finite-time consensus.

\begin{prop}
There exists a sequence of stochastic matrices with positive diagonal that ensures  finite-time consensus on a graph $\mathcal{G}$  only if $\mathcal{G}$ is  strongly connected.
\end{prop}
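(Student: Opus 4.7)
The plan is to exploit the positive-diagonal hypothesis together with the rank-one form $P:=A_TA_{T-1}\cdots A_1=\1 v^T$ of the consensus limit to show first that $P$ is entrywise strictly positive, and then to recover from every positive entry a directed walk in $\mathcal{G}$.

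The first step is to observe that positive diagonals propagate through products of nonnegative matrices. For any $i$, the diagonal entry
\[
P_{ii}=\sum_{k_1,\dots,k_{T-1}}(A_T)_{i,k_{T-1}}(A_{T-1})_{k_{T-1},k_{T-2}}\cdots(A_1)_{k_1,i}
\]
contains the term $\prod_{t=1}^{T}(A_t)_{ii}>0$, while all other terms are nonnegative, so $P_{ii}>0$. Combined with $P=\1 v^T$, which gives $P_{ii}=v_i$, this forces $v_i>0$ for every $i$, and hence $P_{ij}=v_j>0$ for every pair $(i,j)$.

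The second step is a standard path-extraction argument. Fix arbitrary $i,j\in V$. Since $P_{ij}>0$, the expansion of $P_{ij}$ as a sum of nonnegative products must contain at least one strictly positive summand, yielding indices $j=k_0,k_1,\dots,k_T=i$ with $(A_t)_{k_t,k_{t-1}}>0$ for every $t$. By consistency of $A_t$ with $\mathcal{G}$, each such inequality forces either $k_t=k_{t-1}$ or $(k_{t-1},k_t)\in E$. Collapsing the stationary steps leaves a directed walk in $\mathcal{G}$ from $j$ to $i$, which contains a directed path. Since $i,j$ were arbitrary, $\mathcal{G}$ is strongly connected.

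I do not foresee a real obstacle: the only substantive idea is the propagation of positive diagonals through the product, which pairs naturally with the rank-one consensus limit to force $v$ to be strictly positive; the walk-extraction step is routine for products of nonnegative matrices consistent with a graph.
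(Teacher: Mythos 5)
Your proof is correct, but it follows a genuinely different route from the paper's. You argue algebraically on the product matrix: the positive diagonals of the factors force $P=A_T\cdots A_1$ to have a positive diagonal, which together with the rank-one form $P=\1 v^T$ makes $P$ entrywise positive, and then each positive entry $P_{ij}$ unfolds (by nonnegativity and consistency, collapsing the stationary steps coming from the diagonals) into a directed walk from $j$ to $i$; both steps are sound. The paper instead argues on trajectories: assuming $\mathcal{G}$ is not strongly connected, it fixes one particular $0$--$1$ initial condition split along a cut with no arcs crossing it in one direction, and uses the positive diagonal to lower-bound the values on one side by $\prod_t a_t^\ast>0$ while the other side stays at $0$, so consensus cannot occur at any finite $T$. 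Your version is shorter and yields a bonus: the consensus weights $v$ must be strictly positive, i.e.\ every node's initial value influences the final value. The paper's version buys something else, which it exploits in its concluding remarks: since it only uses a single initial condition and never the fact that the same matrices work for all initial states, the impossibility extends to the case where $A_t$ is chosen as a function of $x(t-1)$; your argument, resting on the fixed product being $\1 v^T$, does not directly give that state-dependent strengthening.
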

\begin{proof}

Suppose $\mathcal{G}$ is not strongly connected. Then there exist two subsets $V_1,V_2$ of nodes such that no edge leaving $V_2$ arrives in $V_1$.
Let us take as initial condition $x_i(0) =0$ for all $i\in V_1$ and $x_i(0) = 1$ for all $i\in V_2$, and consider an arbitrary sequence $(A_1,\dots, A_T)$ of stochastic matrices with positive diagonal consistent with $\mathcal{G}$. Since there is no edge from $V_2$ to $V_1$, $[A_t]_{ij} = 0$ for any  $i\in V_1,j\in V_2$  so that the values of the nodes $V_1$ are never influenced by those of the nodes in $V_2$.
Therefore, we have $x_i(t)=0,\  t\geq1,\ i\in V_1.$

We introduce ${h}(t)=\min \{x_i(t):\ i\in V_2\}$. Denote $a_t^\ast =\min \Big\{ \big[A_t \big]_{ii}:\  i\in V_2 \Big\}.$
Then it is easy to see that
\begin{align}
x_i(t+1)&=\sum_{j=1}^n [A_t]_{ij}x_j(t)\nonumber\\&\geq [A_t]_{ii}x_i(t)+\Big(1-[A_t]_{ii} \Big)\min \{x_m(t):\ m\in V\}\nonumber\\
&\geq [A_t]_{ii}{h}(t)\nonumber\\
&\geq a_t^\ast {h}(t)\nonumber
\end{align}
for all $i\in V_2$ and $t$. Thus, we have
${h}(t+1)\geq a_t^\ast {h}(t)$ for all $t\geq 0$,
which implies
\begin{align}
{h}(T)\geq {h}(0) \prod_{t=1}^T a_t^\ast = \prod_{t=1}^T a_t^\ast >0=x_m(T),\ m\in V_1.\nonumber
\end{align}
Therefore, consensus cannot be achieved  by any finite sequence of stochastic matrices with positive diagonals consistent with $\mathcal{G}$ for the initial condition that we have considered.
\end{proof}


We now show that achieving finite-time consensus  requires the presence of a cycle of even length. This does not contradict the tree-based result of Theorem \ref{thm:undirected_trees}, as every pair of opposite edges of a bidirectional graph constitute a directed cycle of length 2.

\begin{prop}\label{prop:cycle_even_length_needed}
There exists a sequence of stochastic matrices with positive diagonal that ensures  finite-time consensus on a graph $\mathcal{G}$   only if $\mathcal{G}$ contains a simple directed cycle with even length.
\end{prop}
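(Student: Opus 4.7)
The plan is a one-shot determinant argument. If $A_T\cdots A_1 = \1 v^T$, then the product has rank one, so $\det(A_T\cdots A_1)=0$ whenever $n\geq 2$ (the case $n=1$ is trivial). It therefore suffices to prove that under the hypothesis that $\mathcal{G}$ has no simple directed cycle of even length, \emph{every} stochastic matrix $A$ consistent with $\mathcal{G}$ with positive diagonal satisfies $\det A > 0$; then $\det(A_T\cdots A_1) = \prod_t \det A_t > 0$, a contradiction.

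The heart of the argument is the Leibniz expansion
\[
\det A \;=\; \sum_{\pi\in S_n} \mathrm{sgn}(\pi)\prod_{i=1}^n A_{i,\pi(i)}
\]
combined with a matching between the disjoint-cycle decomposition of $\pi$ and the simple directed cycles of $\mathcal{G}$. Concretely, I would observe that for a nontrivial cycle $(i_1,i_2,\ldots,i_k)$ of $\pi$, the factor $A_{i_1,i_2}A_{i_2,i_3}\cdots A_{i_k,i_1}$ is positive only if each $A_{i_m,i_{m+1}}$ is positive; since $i_{m+1}\neq i_m$, consistency with $\mathcal{G}$ forces $(i_{m+1},i_m)\in E$ for every $m$, so that $i_1\to i_k\to\cdots\to i_2\to i_1$ is a simple directed cycle of $\mathcal{G}$ of length $k$. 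The hypothesis then forces $k$ to be odd, while fixed points of $\pi$ are cycles of length $1$. Using $\mathrm{sgn}(\pi)=\prod_{\text{cycles of }\pi}(-1)^{k-1}$, every permutation contributing a nonzero term has $\mathrm{sgn}(\pi)=+1$, so $\det A$ is a sum of nonnegative terms; it is strictly positive because the identity permutation contributes $\prod_i A_{ii}>0$ by positive-diagonality.

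The only substantive step is the passage from nontrivial cycles of $\pi$ to simple directed cycles of $\mathcal{G}$. Two small points are worth checking: the vertices in a cycle of $\pi$ are automatically distinct (because $\pi$ is a permutation), so the resulting cycle in $\mathcal{G}$ is indeed simple and the hypothesis genuinely applies; and the assumption $A_{ii}>0$ is used both to make the identity permutation contribute and, implicitly, to guarantee that the Leibniz sum contains at least one strictly positive term. I do not anticipate any further obstacle, since the sign computation and the determinant-vanishing contradiction are then immediate.
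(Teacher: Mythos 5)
Your determinant argument is correct, and it is genuinely different from the paper's proof. You show that if $\mathcal{G}$ has no even simple directed cycle, then every consistent stochastic matrix with positive diagonal has strictly positive determinant: in the Leibniz expansion each contributing permutation decomposes into fixed points and cycles that map (traversed in reverse) to simple directed cycles of $\mathcal{G}$, hence odd, so every nonzero term has sign $+1$ and the identity term $\prod_i A_{ii}>0$ makes the sum strictly positive; multiplicativity of the determinant then contradicts the rank-one (hence singular, for $n\geq 2$) consensus product. All the steps check out, including the direction bookkeeping ($A_{i_m,i_{m+1}}>0$ forces $(i_{m+1},i_m)\in E$) and the fact that length-$2$ cycles (bidirectional edge pairs) count as even, and like the paper you implicitly set aside the degenerate single-node case. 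The paper instead argues on a trajectory: at time $T-1$ it picks a node not yet at the consensus value $x^*$ and, using the positive self-weight, builds a sequence of neighbors whose values alternate strictly above and below $x^*$, which must close into a simple cycle of even length. Each route buys something: yours is shorter, purely algebraic, and actually proves more in one direction (the product of such matrices is nonsingular, not merely non-consensus); the paper's argument yields structural information that is reused later -- the proof of Proposition \ref{prop:cycles_impossibility} relies on the fact that the nodes of the even cycle have values strictly alternating around $x^*$ at time $T-1$ -- and, as noted in the conclusions, it never uses consensus for all initial conditions, so it extends to the case where $A_t$ may depend on $x(t-1)$; your determinant argument does not cover that generalization, since it needs a single fixed matrix product of rank one.
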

\begin{proof}
Suppose that finite-time consensus can be reached on  graph $\mathcal{G}$ in $T$ steps. Consider particular initial conditions $x(0)$, and let $x^*$ be the  consensus value, i.e., $x_j(T)=x^*$ for all $j$. Let  $i_0$ be a node reaching the final value only at the last step, i.e., $x_{i_0}(T-1) \neq x^*$. We suppose without loss of generality that $x_{i_0}(T-1) > x^*$. By definition, $x_{i_0}(T)$ is a convex combination of the values $x_j(T-1)$ of the neighbors $j$ of ${i_0}$ and of $x_{i_0}(T-1)$, with a positive weight for the latter value. Since $x_{i_0}(T-1)>x^*$ and $x_{i_0}(T) = x^*$, there must exist a  neighbor $i_1$ of $i_0$ for which $x_{i_1}(T-1) < x^*$.

By a similar argument, there exists a neighbor $i_2$ of $i_1$ such that $x_{i_2}(T-1) > x^*$.
Doing this iteratively, we can build an arbitrary long sequence of indexes $i_k$ such that $x_{i_k}(T-1) > x^*$ if $k$ is even, and $x_{i_k}(T-1) < x^*$ if $k$ is odd, and where the node
$i_{k+1}$ is a neighbor of $i_k$, as shown in Figure \ref{fig:ex_even cycle}. Since there are only finitely many nodes in the graph, some indices are repeated in this sequence. Let $j^*$ be the first node who is repeated twice in the sequence. By construction of the sequence and of $j^*$, there is a path from $j^*$ to itself passing no more than once any other node. Moreover, this path must be of even length. Otherwise, one of the two first indices $k$ corresponding to $j^*$ would be even, and the other odd, so that we would have simultaneously $x_{j^*}(T-1) >x^*$ and $x_{j^*}(T-1) <x^*$, which is impossible. This completes the proof.
\end{proof}

\begin{figure}
\centering
\includegraphics[scale= .7]{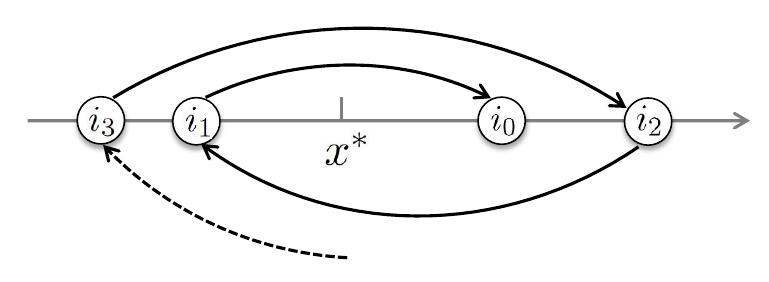}
\caption{Illustration of the proof of Proposition \ref{prop:cycle_even_length_needed}. The node values are sorted at time $T-1$. A simple directed cycle with even length can be constructed if consensus is reached at time $T$.}\label{fig:ex_even cycle}
\end{figure}

Note that all nodes in the cycle of even length in Proposition \ref{prop:cycle_even_length_needed} reach the final value at the last time step.

The presence of a cycle of even length is a necessary condition for finite-time consensus, but is certainly not sufficient. Actually, the next result states that finite-time consensus cannot be achieved if the graph only consists of a cycle of even length.

\begin{prop}\label{prop:cycles_impossibility}
Suppose $\mathcal{G}$ is a simple directed cycle. Then  no finite sequence of stochastic matrices with positive diagonals achieves consensus   on $\mathcal{G}$.
\end{prop}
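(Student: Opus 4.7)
Proof plan: I would argue by contradiction, choosing $T$ to be the smallest integer for which such a consensus sequence exists, and then identifying, step by step, the algebraic constraints that the final matrix $A_T$ and its predecessors must satisfy on the cycle.

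Because $\mathcal{G}$ is a simple directed cycle, every valid matrix $A_t$ has the form $(A_t)_{ii}=\alpha_i^{(t)}\in(0,1)$ and $(A_t)_{i,i-1}=1-\alpha_i^{(t)}$ (with cyclic indexing), and zero elsewhere. Writing $A_T\cdots A_1=\mathbf{1} v^T$, I set $y(t)=x(t)-(v^Tx(0))\mathbf{1}$; since $A_t\mathbf{1}=\mathbf{1}$, the vector $y$ satisfies $y(t+1)=A_{t+1}y(t)$ and $y(T)=0$ for every initial $x(0)$. In particular $y(T-1)\in\ker A_T$. Minimality of $T$ then forces $A_T$ to be singular, for otherwise $y(T-1)=0$ for all $x(0)$ and consensus would already hold at time $T-1$.

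Solving $A_T z=0$ componentwise gives $z_i=-\tfrac{1-\alpha_i^{(T)}}{\alpha_i^{(T)}}z_{i-1}$, and closing the recursion around the cycle yields $(-1)^n\prod_{i=1}^n\tfrac{1-\alpha_i^{(T)}}{\alpha_i^{(T)}}=1$. Since the product is strictly positive, this forces $n$ to be even; the odd-$n$ case produces an immediate contradiction (and is also covered by Proposition~\ref{prop:cycle_even_length_needed}, as an odd cycle contains no simple directed even cycle). For $n$ even, $\ker A_T$ is one-dimensional and spanned by a vector $z^{(T)}$ whose entries strictly alternate in sign around the cycle.

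The substantive case is therefore $n$ even, and the plan for closing is a dimension-counting/rank argument. Since $y(T-1)\in\mathrm{span}(z^{(T)})$ for every $x(0)$, the linear map $x(0)\mapsto y(T-1)=A_{T-1}\cdots A_1(I-\mathbf{1} v^T)x(0)$ has rank at most one. Combined with $A_{T-1}\cdots A_1\mathbf{1}=\mathbf{1}$, this forces $A_{T-1}\cdots A_1=\mathbf{1} v^T+z^{(T)}\phi^T$ with $\mathbf{1}^T\phi=0$, so the partial product has rank at most two. Propagating this backward, peeling off invertible $A_t$ does not change the rank, while each singular $A_t$ encountered has its own one-dimensional, alternating-sign kernel that must lie inside the image of the preceding product. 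The hardest step will be the closing combinatorial argument: tracking these successive kernels and showing that the alignment required for them to collapse the $(n-1)$-dimensional hyperplane $\{y:v^Ty=0\}$ (the range of $(I-\mathbf{1} v^T)x(0)$) onto the single line $\mathrm{span}(z^{(T)})$ is incompatible with the cycle support and row-stochasticity of the $A_t$'s. I expect this to rest on the rigidity of the alternating-sign kernel vectors produced by the cycle structure, which prevents the required sequence of rank drops from being realized by any choice of parameters $\alpha_i^{(t)}\in(0,1)$.
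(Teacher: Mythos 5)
Your setup is sound and overlaps with the paper's key insight: for a consistent stochastic matrix with positive diagonal on an even cycle, singularity forces all off-diagonal weights to be positive and the kernel to be one-dimensional, spanned by a vector whose entries strictly alternate in sign; hence at the last step $x(T-1)-x^*\1$ must be (a multiple of) such an alternating vector. (Minor quibble: your parametrization should allow $\alpha_i^{(t)}=1$, since consistency only permits, and does not require, positive off-diagonal entries; this is harmless because any such row makes the matrix invertible, but as written your case analysis is incomplete.) The problem is that this is only half of a proof, and the half you leave open is the essential one. Your plan is to propagate rank constraints backward through $A_{T-1}\cdots A_1$, tracking the kernels of whichever factors happen to be singular, and you explicitly defer the "closing combinatorial argument" to a hoped-for rigidity of alternating-sign kernels. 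Nothing in the proposal shows that the required collapse of the hyperplane $\{y: v^Ty=0\}$ onto $\mathrm{span}(z^{(T)})$ is impossible; products of cycle-consistent matrices can certainly lose rank, and controlling how the image of a long, partly singular product sits relative to a prescribed line is exactly the difficulty, not a routine verification. As it stands, the argument proves only that $A_T$ must be singular with an alternating kernel, which does not preclude consensus.

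The paper closes the gap in the opposite direction, with a forward invariant rather than backward rank bookkeeping: Lemma~\ref{lem:consecutive_signs} shows that on an even cycle, the property ``there exist two cyclically consecutive entries with the same (weak) sign'' is preserved under multiplication by any consistent stochastic matrix with positive diagonal. Choosing the initial condition $x_1(0)=x_2(0)=1$, $x_i(0)=0$ otherwise, the vector $x(0)-x^*\1$ has two consecutive nonnegative entries, so $x(T-1)-x^*\1$ still has two consecutive entries of equal weak sign; but your own kernel analysis (equivalently, the argument of Proposition~\ref{prop:cycle_even_length_needed}) forces $x(T-1)-x^*\1$ to alternate strictly in sign around the whole cycle, a contradiction. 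If you want to salvage your route, the most natural way to make the ``rigidity'' precise is exactly such a sign-pattern invariant propagated forward from a well-chosen initial condition; without it (or some substitute), your proof is not complete.
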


Without loss of generality, we will restrict our attention to  a cycle $C_n$ of $n$ nodes, where there is a directed edge $(i,i-1)$ for $i=2,\dots,n$ and an edge $(1,n)$. Moreover, we identify $x_{n+1}$ with $x_1$: if $i=n$, then $x_{i+1}$ denotes $x_1$. Similarly, if $i=1$, $x_{i-1}$ denotes $x_n$. To prove Proposition~\ref{prop:cycles_impossibility}, we need the following intermediate result, showing that the presence of two consecutive nodes with the same sign is preserved by multiplication by a stochastic matrix consistent with $C_n$  and with positive diagonal when $n$ is even.

\begin{lem}\label{lem:consecutive_signs}
Let $C_n$ be a cycle of even length $n$ and $A$ a stochastic matrix with positive diagonal consistent with $C_n$. Let $x\in \Re^n$ and $y=Ax$.
If there is  $i\in V$ such that $x_i,x_{i+1}\geq 0$ or $x_i,x_{i+1}\leq 0 $, then there is  $j\in V$ such that $y_j,y_{j+1}\geq 0$ or $y_j,y_{j+1}\leq 0 $.
\end{lem}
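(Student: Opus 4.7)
The plan is a proof by contradiction. First, since $A$ is consistent with $C_n$ and has positive diagonal, the only possibly nonzero entries in row $i$ are $A_{ii}>0$ and $A_{i,i+1}\ge 0$, where indices are taken modulo $n$. Hence the dynamics reduce to $y_i=\alpha_i x_i+(1-\alpha_i)x_{i+1}$ with $\alpha_i:=A_{ii}\in(0,1]$. Suppose toward a contradiction that the conclusion fails, i.e., that for every $j$ the pair $(y_j,y_{j+1})$ is neither both $\ge 0$ nor both $\le 0$; this is equivalent to saying all $y_i$ are nonzero and their signs strictly alternate around the cycle, which is cyclically consistent precisely because $n$ is even. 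By replacing $x$ (and hence $y$) with its negation if necessary, which preserves the strict alternation of $y$, I may assume the hypothesis furnishes an index $j$ with $x_j,x_{j+1}\ge 0$.

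Since $y_j$ is a convex combination of two nonnegative numbers, $y_j\ge 0$; being nonzero it satisfies $y_j>0$, and the strict alternation of $y$ then gives $\operatorname{sign}(y_{j+m})=(-1)^m$ for all $m$. In particular $y_{j+1}<0$, but $y_{j+1}=\alpha_{j+1}x_{j+1}+(1-\alpha_{j+1})x_{j+2}$ with $x_{j+1}\ge 0$; the only way this can be strictly negative is if $\alpha_{j+1}<1$ and $x_{j+2}<0$.

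From here I would propagate the sign pattern all the way around the cycle by induction, showing $x_{j+2k}<0$ and $x_{j+2k+1}>0$ for every $k\ge 1$. For the inductive step, given $x_{j+2k}<0$, the equation $y_{j+2k}>0$ (since $2k$ is even) expressed as a convex combination of $x_{j+2k}<0$ and $x_{j+2k+1}$ forces $x_{j+2k+1}>0$; then $y_{j+2k+1}<0$ expressed as a convex combination of $x_{j+2k+1}>0$ and $x_{j+2k+2}$ forces $x_{j+2k+2}<0$. Taking $k=n/2$, which is an integer because $n$ is even, yields $j+2k\equiv j\pmod n$, so $x_j=x_{j+n}<0$, contradicting the assumption $x_j\ge 0$.

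The main obstacle is essentially bookkeeping: one must track the parity of indices around the cycle and, at each inductive step, verify that the off-diagonal weight $1-\alpha_i$ must be strictly positive in order for a convex combination of one nonnegative (resp.\ nonpositive) entry and another to be strictly negative (resp.\ positive). The evenness of $n$ is used twice — once to make the strict alternation of $y$ cyclically possible, and once to close the induction after exactly $n/2$ steps; for $n$ odd the contradiction hypothesis would be inconsistent from the outset.
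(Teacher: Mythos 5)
Your proof is correct and uses essentially the same mechanism as the paper: the identity $y_i=\alpha_i x_i+(1-\alpha_i)x_{i+1}$ with $\alpha_i\in(0,1]$ and the propagation of strict sign alternation around the even cycle until it clashes with the assumed pair $x_j,x_{j+1}\geq 0$. The only difference is presentational — you assume the negation (strictly alternating $y$) up front and run an induction, whereas the paper walks forward around the cycle finding either a valid pair $j$ or the same contradiction — so no changes are needed.
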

\begin{proof}
Observe first that since $A$ is a stochastic matrix consistent with $C_n$ with positive diagonals, it holds that $y_i = \alpha_i x_i + (1-\alpha_i) x_{i+1}$ for some $\alpha_i\in (0,1]$ for every $i$. As a consequence, the following implications and there symmetric versions for opposite signs hold:

 (a) If $x_i,x_{i+1}\geq 0$, then $y_i\geq 0$.

 (b) If $x_i \geq 0$ and $y_i <0$, then $x_{i+1}<0$.\\

Let us now assume without loss of generality that $x_1,x_2\geq 0$. It follows from  implication (a) that $y_1\geq 0$.  If $y_2\geq 0$ then the result holds  with $j=1$. Otherwise, $y_2 <0$, and it follows from implication (b) above that $x_3 <0$. Now if $y_3 \leq 0$, then the result holds with $j=2$ since $y_2<0$. Otherwise, $y_3 >0$, which by (b) implies that $x_{4} >0$. By repeating this argument and using the fact that $n$ is even, we see that either the result holds for some $j$, or $x_n >0$ and $y_{n+1} = y_1 <0$, in contradiction with our initial assumptions.
\end{proof}

We  now prove Proposition \ref{prop:cycles_impossibility}.
\begin{proof}
If the number of nodes $n$ is odd, the result follows directly from Proposition \ref{prop:cycle_even_length_needed}. Let us thus assume that $n$ is even, and suppose that there exists a sequence of $T$ stochastic matrices with positive diagonals consistent with $C_n$ guaranteeing  finite-time consensus. We consider the following initial condition: $x_1(0) = x_2(0) =1$, and $x_i(0) = 0$ for every other $i$; and we denote by $x^*$ the consensus value that the system reaches for this initial condition. Clearly $x^*\leq 1$, so that $x_1(0) \geq x^*$ and $x_2(0)\geq x^*$. By applying
Lemma \ref{lem:consecutive_signs} recursively to $x(t) - x^*\1$, we see that for any time $t\leq T$, and in particular for $t=T-1$, there exists $j$ such that either $x_j(t)-x^*\geq 0$ and $x_{j+1}(t)-x^*\geq 0$ or $x_j(t)-x^*\leq 0 $ and $x_{j+1}(t)-x^*\leq 0$.

On the other hand,  the proof of Proposition \ref{prop:cycle_even_length_needed} shows that if consensus is reached at iteration $T$ on a value $x^*$, the graph must contain a cycle whose nodes have values at time $T-1$ that are all different from $x^*$, and for which the sign of $x_i(T-1)-x^*$ are opposite for any two consecutive nodes on the cycle. Since the only cycle of $C_n$ is the whole graph itself, this means that for every $i$, $x_i(T-1)-x^*$ and $x_{i+1}(T-1) -x^*$ are nonzero and have opposite signs.

We thus obtain a  contradiction, which implies that consensus in finite time cannot be achieved for cycles of even length.
\end{proof}


We have thus proved so far that finite-time consensus can be achieved on a directed graph $\mathcal{G}$ only if it is strongly connected and contains a simple cycle of even length, and that it cannot be achieved if it only consists of a cycle of even length.
The combination of these  impossibility results might suggest that finite-time consensus can never be achieved unless the graph contains a bidirectional spanning tree or is \quotes{equivalent} in some sense to such a graph. This is however not true. Consider the example in Figure \ref{fig:ex_directed_graph}, consisting of a directed cycle of length 4 to which is added one bidirectional edge between nodes 1 and 3. One can verify that the following matrices are consistent with the graph
$$
A_1=A_3=\frac{1}{2}\begin{pmatrix}
1&  1 & 0 & 0 \\
0&  1 & 1 & 0\\
0 & 0 & 1 & 1\\
1 & 0 & 0 & 1
\end{pmatrix} ,\hspace{.1cm}
A_2=A_4= \frac{1}{2}\begin{pmatrix}
1 & 0 & 1 & 0 \\
0 & 1 & 1 & 0\\
1 & 0 & 1 & 0\\
1 & 0 & 0 &1
\end{pmatrix},
$$
and that $A_4A_3A_2A_1 = \frac{1}{4}\1\1'$, so that finite-time average consensus can be achieved.

\begin{figure}
\centering
\includegraphics[scale= .6]{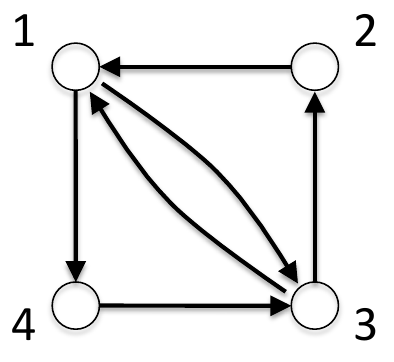}
\caption{Example of directed graph on which finite-time average consensus can be achieved, despite the fact that it does not have a bidirectional spanning tree. }\label{fig:ex_directed_graph}
\end{figure}

\section{Conclusions and open questions}\label{sec:ccl}
This paper discussed  the existence of finite-time convergent (average) consensus algorithms.

We have provided a new proof that (average) consensus can always be achieved by a finite sequence of matrices on every connected undirected graph.
For directed graphs, we have proven that finite-time consensus is reachable  only if
the graph is strongly connected and contains a simple directed cycle with even length, but that it cannot be reached if the graph only consists of such a directed cycle.
This shows that requiring all diagonal elements to be positive  reduces the set of graphs on which finite-time consensus or average consensus can be reached. An adaptation of the \quotes{gather and distribute} method described in Section 4.2 of \cite{LG} shows indeed that without this requirement, finite-time average consensus can be reached for any strongly connected graph.

Note that our impossibility proofs never use the fact that the sequence of matrices must drive the system to consensus for every initial condition. So our impossibility results also hold in the more general case where the matrix $A_t$ can be chosen as a function of $x(t-1)$.

Finally, we have also provided an example of a directed graph where finite-time average consensus can be achieved.
The necessary condition combined with the example suggest that the precise conditions under which finite-time consensus can be achieved over a general directed graph could be intricate.


\begin{thebibliography}{1}


\bibitem{JNTthesis}J. N. Tsitsiklis, ``Problems in decentralized decision making and computation," Ph.D. thesis, Dept. of Electrical Engineering and
Computer Science, Massachusetts Institute of Technology, 1984.

\bibitem{JNT86}
J. N. Tsitsiklis, D. Bertsekas, and M. Athans, ``Distributed asynchronous
deterministic and stochastic gradient optimization algorithms," {\em
IEEE Trans. Automatic Control}, vol. 31, no. 9, pp. 803-812, 1986.


\bibitem{jad03}
A. Jadbabaie, J. Lin, and A. S. Morse,
\newblock ``Coordination of groups of mobile autonomous agents using nearest neighbor rules,"
\newblock {\em IEEE Trans. Automatic Control}, vol. 48, no. 6, pp. 988-1001, 2003.


\bibitem{xiao} L. Xiao and S. Boyd, ``Fast linear iterations for distributed averaging," {\em Systems and Control Letters}, vol. 53, pp. 65-78,  2004.


\bibitem{ren05} W. Ren and R. Beard, ``Consensus seeking in multiagent systems under dynamically
changing interaction topologies," {\em IEEE Trans. on Automatic Control}, vol. 50, no. 5, pp. 655-661,
2005.

\bibitem{VBcdc} V. Blondel, J. M. Hendrickx, A. Olshevsky and J. Tsitsiklis, ``Convergence in multiagent coordination, consensus, and flocking," {\em IEEE Conf. Decision and Control}, pp. 2996-3000, 2005.





\bibitem{caoming1} M. Cao,  A. S. Morse and B. D. O. Anderson, ``Reaching a consensus in a dynamically changing
environment: a graphical approach," \newblock {\em SIAM J. Control Optim.}, vol. 47, no. 2, pp. 575-600, 2008.



\bibitem{nedic09} A. Nedi\'{c}, A. Olshevsky, A. Ozdaglar, and J. N. Tsitsiklis, ``On distributed
averaging algorithms and quantization effects," {\it IEEE Trans.
Automatic Control}, vol. 54, no. 11, pp. 2506-2517, 2009.

\bibitem{olshevsky09} A. Olshevsky and J. N. Tsitsiklis, ``Convergence speed in distributed consensus and averaging,"  {\em  SIAM J. Control Optim.}, vol. 48, no.1, pp. 33-55, 2009.

\bibitem{aysal09} T. C. Aysal, M. E. Yildiz, A. D. Sarwate, and A. Scaglione, ``Broadcast gossip algorithms for consensus," {\em IEEE Trans. Signal Processing}, vol. 57, no.7, pp. 2748-2761, 2009.





\bibitem{cortes06} J. Cort\'{e}s, ``Finite-time convergent gradient flows with applications to
network consensus," {\em  Automatica}, vol. 42, no. 11, pp. 1993-2000, 2006.

\bibitem{hui08} Q. Hui, W. M. Haddad, and S. P. Bhat, ``Finite-time semistability and consensus for nonlinear dynamical networks," {\em  IEEE Trans. Autom. Contr.}, vol.53, pp. 1887-1900, 2008.


\bibitem{wang2010}  L. Wang and F. Xiao, ``Finite-time consensus problems for
networks of dynamic agents," {\em IEEE Trans. Autom. Contr}, vol. 55, no.4, pp. 950-955, 2010.

\bibitem{kashyap07} A. Kashyap, T. Ba\c{s}ar, and R. Srikant, ``Quantized consensus," {\em Automatica}, vol. 43, pp. 1192-1203, 2007.

\bibitem{aysal08} T. C. Aysal, M. J. Coates,  and M. G. Rabbat, ``Distributed average consensus
with dithered quantization," {\em IEEE Trans. Signal Processing}, vol. 56, no.10, pp. 4905-4981, 2008.


\bibitem{LG} L. Georgopoulos, ``Definitive consensus for distributed data inference," PhD thesis, EPFL, Lausanne, 2011.

\bibitem{kibangou} A. Y. Kibangou, ``Finite-time consensus based protocol for distributed estimation over AWGN channels," in {\em  Proceedings of the Joint 50th IEEE Conference on Decision and Control and European Control Conference}, pp. 5595-5600, December 2011.

\bibitem{HJOG2012} J. M. Hendrickx, R. M. Jungers, A. Olshevsky and G. Vankeerberghen, ``Graph diameter, eigenvalues, and minimum-time consensus",  {\em Automatica}, vol. 50, no.2, pp 635-640, 2014.


\bibitem{guodong1a} G. Shi and K. H. Johansson. ``Convergence of distributed averaging and maximizing algorithms: Part I: time-dependent graphs,” in {\em American Control Conference}, pp. 6096-6101, Washington, DC, 2013.

\bibitem{guodong1b} G. Shi and K. H. Johansson. ``Convergence of distributed averaging and maximizing algorithms: Part II: state-dependent graphs.” in {\em American Control Conference}, pp. 6859-6864, Washington, DC, 2013.
 


\bibitem{ko2009matrix} C.-K. Ko and X. Gao, ``On matrix factorization and finite-time average-consensus,"  in {\em Proceedings of the 48th IEEE Conference on Decision and Control 2009 held jointly with the 2009 28th Chinese Control Conference. CDC/CCC 2009}, pp. 5798-5803, 2009.


\bibitem{guodong2} G. Shi, M. Johansson and K. H. Johansson, ``When do gossip algorithms converge in finite time?" in {\em Proceedings of the 21st International Symposium on Mathematical Theory of Networks and Systems (MTNS'14)},  pp. 474-478, 2014.



\bibitem{ko2009scheduling} C.-K. Ko and L. Shi, ``Scheduling for finite time consensus," in {\em Proceedings of the American Control Conference, 2009. ACC'09}, pp. 1982--1986, 2009.




\bibitem{haj} J. Hajnal, ``Weak ergodicity in non-homogeneous Markov chains," {\em Proc. Cambridge
Philos. Soc.}, no. 54, pp. 233-246, 1958.



\bibitem{VBAO} V.  Blondel and A. Olshevsky, ``On the cost of deciding consensus,"  to appear in {\em SIAM Journal on Control and Optimization}, {\tt arXiv: 1203.3167}, 2012.







\bibitem{chris} S. Sundaram and C. N. Hadjicostis, ``Finite-time distributed consensus
in graphs with time-invariant topologies," in {\em Proceedings of American
Control Conference}, pp.711-716, 2007.

\bibitem{yuanye} Y. Yuan, G. Stan, L. Shi, M. Barahona and J. Goncalves, ``Decentralised minimum-time consensus," {\em Automatica}, in press.


\bibitem{julien2011} J. M. Hendrickx, A. Olshevsky, and J. N. Tsitsiklis, ``Distributed anonymous discrete function computation,"  {\it IEEE Trans.
Autom. Control}, vol. 56, no. 10, pp. 2276-2289, 2011.











\end{thebibliography}
\end{document}